\newtheorem{theorem}{Theorem}
\newcommand{\beq}{\begin{equation}}
\newcommand{\eeq}{\end{equation}}
\newcommand{\barr}{\left[\begin{array}}
\newcommand{\earr}{\end{array}\right]}
\newcommand{\rank}{\mbox{rank}\,}
\newcommand{\fp}{\mathbb{F}_{p}}
\newcommand{\fq}{\mathbb{F}_{q}}
\newcommand{\bpf}{\begin{proof}}
\newcommand{\epf}{\end{proof}}
\newcommand{\zz}{\ensuremath{\mathbb{Z}}}
\newcommand{\ftwo}{\ensuremath{\mathbb{F}_{2}}}
\newcommand{\ff}{\ensuremath{\mathbb{F}}}
\newcommand{\al}{\alpha}
\title{Local Inversion of maps:\\Black box Cryptanalysis}
\author{Virendra Sule\\Department of Electrical Engineering\\
Indian Institute of Technology Bombay\\Mumbai 400076, India\\vrs@ee.iitb.ac.in}
\date{July 24, 2022}
\begin{document}

\maketitle

\noindent
Subject classification: cs.CR, cs.CC, math.NT

\begin{abstract}
    This paper is a short summery of results announced in a previous paper on a new universal method for Cryptanalysis which uses a Black Box linear algebra approach to computation of local inversion of nonlinear maps in finite fields. It is shown that one local inverse $x$ of the map equation $y=F(x)$ can be computed by using the minimal polynomial of the sequence $y(k)$ defined by iterates (or recursion) $y(k+1)=F(y(k))$ with $y(0)=y$ when the sequence is periodic. This is the only solution in the periodic orbit of the map $F$. Further, when the degree of the minimal polynomial is of polynomial order in number of bits of the input of $F$ (called low complexity case), the solution can be computed in polynomial time. The method of computation only uses the forward computations $F(y)$ for given $y$ which is why this is called a Black Box approach. Application of this approach is then shown for cryptanalysis of several maps arising in cryptographic primitives. It is shown how in the low complexity cases maps defined by block and stream ciphers can be inverted to find the symmetric key under known plaintext attack. Then it is shown how RSA map can be inverted to find the plaintext as well as an equivalent private key to break the RSA algorithm without factoring the modulus. Finally it is shown that the discrete log computation in finite field and elliptic curves can be formulated as a local inversion problem and the low complexity cases can be solved in polynomial time.  
\end{abstract}
\begin{description}
\item \emph{Keywords}\/: Cryptanalysis, Sequences over Finite Fields, Symmetric encryption, RSA, Discrete Logarithm.
\end{description}

\section{Introduction}
If $F:\ff^n\rightarrow\ff^n$ is a map acting in a cartesian space $\ff^n$ over a finite field $\ff$, then the \emph{Local Inversion Problem} of $F$ at a given $y$ in $\ff^n$ is the problem of computing all $x$ in $\ff^n$ such that
\beq\label{Eqfund}
y=F(x)
\eeq
Such a problem arises in most situations of Cryptanalysis of symmetric and public key primitives such as problem of key recovery (under known plaintext attack) in symmetric key algorithms, plaintext recovery in RSA without factoring the modulus, private key recovery in RSA without factoring the modulus or finding discrete logarithms in finite fields and elliptic curves. However in many such cases no algebraic or Boolean (in case of $\ff$ being $\ftwo$), model of the equation (\ref{Eqfund}) may be available or if available, is suitable for algorithmic solution due to a large number of unknowns in the equations. The map $F$ in such cases can only be practically used as a black box, an algorithm or a hardware machine or a computer code, available for forward computation $F(x)$ for a given $x$ efficiently. Also the number of times the forward computation $F(x)$ can be carried out should also be practically feasible. Hence it is suggested that the approach to solution of Cryptanalysis problems  using only forward computations by $F$ in practically feasible number of times, be called a \emph{Black Box approach} analogous to the well known Black Box Linear Algebra for solving linear systems of equations \cite{Wiedemann, VonzerGathen}. In this paper we develop such an approach and determine condition for local inversion to be computable in polynomial time in order $O(n^k)$. Although no assumption such as $F$ being invertible (or a permutation) is made, practically and most often, computing one solution $x$ of local inversion is a significant achievement although there may be more solutions.
This approach is thus a universal methodology to formulate diverse problems of Cryptanalysis of symmetric as well as public key cryptographic algorithms. This paper is a short summery of the theory of local inversion announced in \cite{suleLI} where a complete algorithm to determine all solutions of (\ref{Eqfund}) is also discussed. In general the theory presented in \cite{suleLI} shows that the problem of computing all solutions of (\ref{Eqfund}) is not solvable by purely forward computation alone. This is because solutions arise not just on periodic orbits of recurrences of the map $F$ but also on the chains $F^{(k)}(z)$, $k=0,1,2,\ldots$ where $z$ are points in the Garden of Eden (GOE) of $F$. The computation of GOE on the other hand is a NP-hard problem. This short paper is aimed at presenting computation of the restricted set of solutions $x$ of (\ref{Eqfund}) when the recurrence sequence $F^{k}(y)$ is periodic. There are good practical reasons why cryptographic algorithms result into maps $F$ which are one to one (or embedding) from the key space. Hence the equations to be solved in most practical cases come with an $F$ which is a permutation map. Hence all inverses $x$ for $y$ in their image are unique.   

\subsection{Local inversion using the Black Box approach under low complexity data}
It has been known in the literature that that the algebraic approach to solving (\ref{Eqfund}) is a very challenging computational problem. Either the algebraic equations expressing the relation (\ref{Eqfund}) involve too many (latent) variables and equations for which most approaches fail to scale up, or building of the algebraic (or Boolean model) of (\ref{Eqfund}) with minimum number of variables by symbolic elimination is another challenging computation. Hence it is worthwhile considering the Black Box approach in which $F$ is not specified algebraically but the output $y=F(x)$ can be computed for any input $x$ efficiently and moreover the number of times application of such a forward computation is needed is also practically feasible. It is shown in \cite{suleLI} that an existence of a periodic recurring sequence defined by $F$ and $y$ of polynomially bounded $O(n^k)$ \emph{Linear Complexity} (LC), is sufficient to solve one local inverse $x$ (or a rational root) in polynomial time\footnote{By polynomial time of a computation it is meant that the computation is achievable in $O(n^k)$ steps equivalently in $O((\log q)^k)$ steps in the field $\fq$}. This result has an important consequence to applications in Cryptanalysis. It turns out that Cryptanalysis of Symmetric key algorithms, RSA cryptanalysis of inverting ciphertetxt to plaintext as well as breaking all RSA encryptions by same public key from chosen ciphertext attack (CCA) and solutions of Discrete Logarithms in finite fields and Elliptic Curves (ECDLP) are all local inversion problems and hence the low LC of respective sequences is a sufficient condition for solving them in polynomial time. Local invesion is also a new methodology and a universal attack for solving all of these cryptanalysis problems. This article is a very short summery of the local inversion methodology announced in \cite{suleLI} which the reader is requested to refer for details and proofs of results. Inversion of maps for cryptanalysis has been studied in previous literature by the Time Memory Tradeoff (TMTO) attack \cite{Hellman,Hays,StampLow}. However TMTO does not consider finite fields structure on the domain of the maps. TMTO is also referred by name Rainbow attack \cite{Rainbow} which is governed by square root bound on the number of points in the image of $F$ which is of exponential size in the number of unknowns, hence TMTO attack is not feasible for realistic sizes of domains (or number of unknowns) of maps $F$ (or equation (\ref{Eqfund})).  

\section{Theory of local inversion of maps}
It is stated above that local inversion of a map $F$ at a value $y$ is equivalent to solving the roots of the system of (polynomial) equations (\ref{Eqfund}) if such a system is available. Hence this appears to be a problem solvable by Computer Algebra, by building the algebraic system of equations corresponding to the map $F$. Alternatively we observe that the solutions of (\ref{Eqfund}) generate trajectories of the dynamical system
\beq\label{DynSys}
y(k+1)=F(y(k)),y(k)\in\ff^n,k=0,1,2,\ldots
\eeq
which in turn are in one to one correspondence with the recurring sequences
\beq\label{Seq}
S(F,y)=\{y,F(y),F^(2)(y),\ldots\}
\eeq
Thus the theory behind local inversion of maps relates three different Mathematical domains, maps $F$ defined by algebraic equations in finite fields, dynamical systems defined by maps $F$ and the ensemble of sequences generated by maps $F$ with respect to points $y$ in their range. The complexity of solving the problem of local inversion depends on the distribution of linear complexities of sequences $S(F,y)$ where $y$ is taken over the range of $F$ when the sequences are periodic. The shift of focus from solving an algebraic system model of $y=F(x)$ to a Black Box model of $F$ generating sequence $S(F,y)$, gives a practically feasible solution to the local inversion problem under the special case of low LC of the sequence $S(F,y)$. However the solutions $x$ lying in the chains $F^{(k)}(z)$ for points $z$ in the GOE of $F$ are not solved by this method of computing minimal polynomials.

\subsection{Role of Linear Complexity (LC)}
In literature, LC was proposed as complexity measure of sequences in finite fields. Such complexity measure was used for qualifying the complexity of output sequences of maps defined by Block ciphers for counter inputs (in counter mode of operation), or for outputs of Stream ciphers \cite{Rueppel}. LC of several specially constructed sequences are also studied in Number Theory \cite{Nied}. Low LC $m$ of such sequences leads to the Berlekamp Massey attack which allows predicting the complete sequence over the whole period from $2m$ terms. In the local inversion approach however, the sequence $S(F,y)$ defined recursively by $F$ and $y$ is completely different from a sequence of an output stream of the cipher for counter input or output of Stream cipher for a fixed IV. To understand this difference consider an output sequence generated by an encryption function, $C(i)=E(K,Q(i))$ where $Q(i)$ is the internal state of a counter or a stream cipher. The key recovery problem for a known input $Q(0)$ and output $C(0)$ involves solving the equation $y=C(0)=E(K,Q(0))=F(K)$. Hence the recurrence sequence (\ref{Seq}) is completely different from $C(i)$. The LC of output sequences is used for modeling the sequence as an output sequence of LFSRs. But such modeling does not solve the symmetric key recovery problem or local inversion of $F$ at $y$. Recursive sequences such as $S(F,y)$ have not been studied previously for known cipher algorithms or general maps from the objective of inversion of maps. The application of the concepts of LC and minimal recurrence for local inversion is a fresh idea proposed in \cite{suleLI} and develops a Black Box approach for inversion of non-linear maps.

\subsection{A practical constraint in solving the problem}
Due to the finiteness of the field $\ff$ the recurring sequences $S(F,y)$ are quasiperiodic for any $y$. However even when the sequence is periodic the full sequence $S(F,y)$ is never practically available for computation since the period may be of exponential order in number of variables $n$ or the field size $n=\log |\ff|$. Hence only a small number of terms of the sequence $S(F,y)$ of order $O(n^3)$ are available for solving $x$ given $y$. Hence the data of limited number of terms of $S(F,y)$ may be insufficient to compute any solution $x$ or may compute a false positive solution which needs to eliminated by verifying whether $y=F(x)$. 

\subsection{Linear Complexity of a sequence}
Given a sequence $\hat{s}=\{s_0,s_1,s_2,\ldots\}$ over a finite field $\ff$, due to finiteness of $\ff$, the sequence is always ultimately periodic, i.e. there exist numbers $r\geq 0$ and $N>0$ such that
\beq\label{UPseq}
s_{(k)}=s_{(k+N)}\mbox{ for }k\geq r
\eeq
smallest $r$ is called the pre-period of $\hat{s}$ and smallest $N$ is called period. The sequence is called periodic of period $N$ when $r=0$. A polynomial can be associated to the periodic sequences satisfying relation (\ref{UPseq}) by defining a linear operator of shifting right by one step the sequence, 
\[
X(\hat{s})=\{s_{(k+1)},k=0,1,2,\ldots\}
\]
and scalar multiplication of $\hat{s}$ by constants. Then the relation (\ref{UPseq}) for periodic sequences can be expressed as
\[
s_k=X^{N}(s_k)\mbox{ for }k\geq 0.
\]
which is equivalent to
\[
(X^N-1)(\hat{s})=0
\]
In general if 
\[
p(X)=X^d-\sum_{i=0}^{(d-1)}\al_iX^i
\]
is any polynomial in $\ff[X]$ such that with the definition of $X$ as shift operator as above,
\beq\label{charpolyrelation}
p(X)(\hat{s})=0
\eeq
then $p(X)$ is called a characteristic polynomial of $\hat{s}$. Thus $(X^N-1)$ is a characteristic polynomial. A characteristic polynomial of smallest degree is unique and is called the \emph{minimal polynomial} of $\hat{s}$. The degree of the minimal polynomial is called the \emph{Linear Complexity} (LC) of the sequence. Now getting back to the sequence $S(F,y)$ defined in (\ref{Seq}), let $p(X)$ be a characteristic polynomial of $S(F,y)$. Then the relation (\ref{charpolyrelation}) signifies a \emph{linear recurrence relation} of degree $d$ satisfied by $S(F,y)$,
\beq\label{LRR}
F^{(d+j)}(y)-\sum_{i=0}^{(d-1)}\al_iF^{(i)}(y)=0
\eeq
for $j=0,1,2,\ldots$.
If $S(F,y)$ is periodic of period $N$, $(X^N-1)$ is always a characteristic polynomial. But there is a possibility of a lower degree characteristic polynomial satisfied by $S(F,y)$. Hence there exists a minimal polynomial for $S(F,y)$ which is a characteristic polynomial of least degree and is unique because the polynomial is monic. 

\subsection{Solution of the local inversion problem}
First consider the theoretical situation in which there are no limitations on the number of terms of the sequence $S(F,y)$ being available for inversion. In the first theorem below we show that computation of the minimal polynomial solves the local inversion problem when the sequence $S(F,y)$ is periodic. Let the minimal polynomial of a periodic $S(F,y)$ be denoted as
\beq\label{minpoly}
m(X)=X^m-\sum_{i=0}^{(m-1)}\al_{i}X^{i}
\eeq

\begin{theorem}
Let the sequence $S(F,y)$ be periodic of period $N$ and  $m(X)$ be its minimal polynomial. Then
\begin{enumerate}
    \item $m(0)\neq 0$.
    \item $N$ is the order of $m(X)$ over $\ff$.
    \item The solution of the local inverse of $y=F(x)$ in the periodic $S(F,y)$ is
    \beq\label{solution}
    x=(1/\al_0)[F^{(m-1)}(y)-\sum_{i=1}^{(m-1)}\al_iF^{(i-1)}(y)]
    \eeq
\end{enumerate}
\end{theorem}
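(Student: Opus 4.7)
The plan is to treat the sequence $\hat{s}=S(F,y)$ as a module over $\ff[X]$ where $X$ is the right-shift operator, and to exploit the fact that periodicity ($X^N\hat{s}=\hat{s}$) makes $X$ act invertibly on shifts of $\hat{s}$, with $X^{-1}=X^{N-1}$. All three assertions will follow from this invertibility combined with the minimality of $m(X)$. The main structural observation, which also resolves what looks at first like the obstacle, is that although $F$ is nonlinear on $\ff^n$, the backward shift on an orbit coincides with the high iterate $F^{(N-1)}$; this is the mechanism that lets a \emph{linear} combination of forward iterates deliver a genuine preimage.

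For item (1), I will argue by contradiction. If $\alpha_0=0$ then $m(X)=X\,q(X)$ with $\deg q=m-1$. Applying $X^{N-1}$ to the annihilation $X\,q(X)\hat{s}=0$ and using $X^N\hat{s}=\hat{s}$ yields $q(X)\hat{s}=0$, contradicting the minimality of $m(X)$. With $m(0)\neq 0$ in hand, the order of $m(X)$ over $\ff$ is well-defined; for item (2) I will show mutual divisibility between the period $N$ and this order $M$. Periodicity gives $(X^N-1)\hat{s}=0$, so as minimal annihilator $m(X)\mid X^N-1$, forcing $M\mid N$. Conversely $m(X)\mid X^M-1$ gives $(X^M-1)\hat{s}=0$, whence $N\mid M$ because $N$ is the least period. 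Hence $M=N$.

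For item (3), I will rewrite $m(X)\hat{s}=0$ as
\[
X^m\hat{s}=\alpha_0\hat{s}+\sum_{i=1}^{m-1}\alpha_i X^i\hat{s},
\]
multiply through by the (now legitimate) operator $X^{-1}$, and solve for $X^{-1}\hat{s}$ to obtain
\[
X^{-1}\hat{s}=\frac{1}{\alpha_0}\left[X^{m-1}\hat{s}-\sum_{i=1}^{m-1}\alpha_i X^{i-1}\hat{s}\right].
\]
Reading off the $0$-th term on both sides yields exactly formula (\ref{solution}) for the candidate $x$. The left-hand term is $s_{N-1}=F^{(N-1)}(y)$, and periodicity gives $F(F^{(N-1)}(y))=F^{(N)}(y)=y$, so this vector is indeed a local inverse. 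Uniqueness within the periodic orbit follows because $F$ restricted to a periodic orbit is a bijection on that orbit.

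The step I anticipate being the most delicate to present cleanly is the passage from an operator identity on the sequence of vectors to a scalar-like formula for a preimage: one must be careful that $X^{-1}$ is only defined on the submodule of shifts of $\hat{s}$ (not on arbitrary $\ff^n$-valued sequences) and that the identification $X^{-1}=X^{N-1}$ there is precisely what converts a linear recurrence among iterates into a preimage formula. Once this is articulated, the remaining verification is a bookkeeping check that substituting $x$ from (\ref{solution}) into $F$ reproduces $y$, which is automatic from the orbit-level interpretation above.
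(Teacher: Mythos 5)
Your proposal is correct and follows essentially the same route as the paper: both derive $\alpha_0\neq 0$ from $m(X)\mid (X^N-1)$ (your factor-out-$X$ contradiction is an equivalent micro-variant), identify the order of $m(X)$ with the period, and obtain the inverse by extending the minimal recurrence one step backward along the periodic orbit --- your application of $X^{-1}=X^{N-1}$ to the annihilation identity is precisely the paper's observation that $S(F,x)$ for $x=F^{(N-1)}(y)$ satisfies the same recurrence. The only differences are cosmetic: you make the mutual-divisibility argument for item (2) and the uniqueness-on-the-orbit remark explicit, which the paper leaves implicit.
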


\begin{proof}
Let $m(X)$ as denoted in (\ref{minpoly}) be the minimal polynomial of $S(F,y)$. Since the period of the sequence is $N$, $m(X)|(X^N-1)$ which implies $\al_0=m(0)\neq 0$ and since $N$ is the smallest number such that the recurrence 
\[
F^{(N)}(y)=y
\]
holds, $N$ is the order of $m(X)$. As $m(X)$ is also a characteristic polynomial of $S(F,y)$,
\[
m(X)(S(F,y))=0
\]
which is equivalent to
\[
F^{(m)}(y)-\sum_{i=0}^{(m-1)}\al_iF^{i}(y)=0
\]
Let $x$ be the solution of $y=F(x)$ in the same periodic orbit of the dynamical system (\ref{DynSys}). Then $S(F,x)$ is the sequence $S(F,y)$ shifted right by one index hence the sequence satisfies the same recurrence relation (\ref{LRR}) and the relation
\[
m(X)(S(F,x))=0
\]
which is equivalent to
\[
F^{(m)}(x)-\sum_{i=0}^{(m-1)}\al_iF^{i}(x)=0
\]
w.r.t. the minimal polynomial.
Substituting for $y=F(x)$ and solving for $x$ from the above equation as $\al_0\neq 0$ gives the solution as stated.
\end{proof}

It is important to observe that the solution $x$ of the local inverse is obtained in terms of a linear combination of the sequence $\{y,F(y),F^{(2)}(y),\ldots,F^{(m-1)}(y)\}$. 

\subsection{An incomplete algorithm}
We now come to the practically most relevant problem of local inversion of a map $F$ at $y$. The practical constraint to be faced is that the sequence $S(F,y)$ is not available for the full period because the period is exponential in $n$ (or close to the size of the field when $n=1$). However the partial sequence is available upto $M$ terms where $M$ is of polynomial size $O(n^k)$ for some $k$ as the sequence,
\beq\label{Mseq}
\{y,F(y),\ldots,F^{(M-1)}(y)\}
\eeq
In this case the recurrence relation (\ref{LRR}) is satisfied only upto the degree $m=\lfloor M/2\rfloor$ since there is no further data of the sequence available beyond $M$ terms. Hence the largest degree of the minimal polynomial is $m$. As shown in section 2.6 of \cite{suleLI} the minimal polynomial of the limited sequence upto $M$ terms is obtained from the unique solution of the linear system 
\beq\label{hankelsys}
H(k)\hat{\alpha}=h(k+1)
\eeq
where $H(k)$ is the Hankel matrix defined as
\[
H(k)=
\barr{llll}
y & F(y) & \ldots & F^{(k-1)}(y)\\
F(y) & F^{(2)}(y) & \ldots & F^{(k)}(y)\\
\vdots & \vdots & \ldots & \vdots\\
F^{(k-1)}(y) & F^{(k)}(y) & \ldots & F^{(2k-1)}(y)
\earr
\]
of largest rank for $k\leq\lfloor M/2\rfloor$, $\hat{\alpha}=(\al_0,\al_1,\ldots,\al_{(k-1)})^T$ the vector of co-efficients of the minimal polynomial of degree $k$ and the right hand side vector $h(k+1)$ is the last column of $H(k+1)$ after dropping the bottom entry.

\begin{algorithm}\label{IncompleteAlgo}
\caption{Incomplete algorithm to find the unique solution of $y=F(x)$ given $y$}
\begin{algorithmic}[1]
\Procedure{LocalInversion}{given $M$ terms of $S(F,y)$}
\State \textbf{Input} $M$ of $O(n^k)$ and the sequence (\ref{Mseq}).
\Repeat
\State set $m=\lfloor M/2\rfloor$.
    \If {$m=\rank H(m)=\rank H(m+1)$} 
    \State A unique minimal polynomial of degree $m$ exists
    \State Compute the minimal polynomial and solution $x$ as in (\ref{solution}).
        \If {solution satisfies $y=F(x)$}
        \State Return, ``Solution $x$".
        \State End procedure
        \Else 
        \If {$\rank H(m+1)>m$}
        \State Return ``Insuffienct data to compute the local inverse".
        \State End procedure
        \EndIf
        \EndIf
    \Else 
    \State $\rank H(m)<m$, Reduce $M\leftarrow M-1$ reset $m$
    \EndIf
\Until {$M=2$}
\State Return ``insufficient data to compute the local inverse".
\EndProcedure
\end{algorithmic}
\end{algorithm}

The incomplete algorithm is useful to decide whether the local inverse can be computed from the given data. Since the given data is the partial sequence of $M$ terms which is of polynomial size, the algorithm computes the inverse in polynomial time when the data is sufficient to compute the inverse. Algorithm also checks for false positive cases of solutions and if no solution is found returns that the data is insufficient. We can thus conclude 

\begin{theorem}
Let $M$ be of polynomial order $O(n^k)$. If the sequence (\ref{Mseq})
has minimal polynomial satisfying the recurrence relation (\ref{LRR}) of degree $m\leq\lfloor M/2\rfloor$ and the solution $x$ computed in (\ref{solution}) satisfies $y=F(x)$, then the local inverse is computable in polynomial time.
\end{theorem}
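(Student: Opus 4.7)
The plan is to verify that Algorithm 1 executes every step in polynomial time under the stated hypotheses, because the correctness of formula (\ref{solution}) — conditioned on possessing the true minimal polynomial — is already guaranteed by Theorem 1, and the assumption that the computed $x$ satisfies $y = F(x)$ removes the worry of false positives arising from truncated data. So the task reduces to an efficiency audit of the Black Box pipeline.

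First I would bound the cost of generating the data (\ref{Mseq}). Since $F$ is assumed to be a Black Box computable in polynomial time in $n$, and $M = O(n^k)$, producing the iterates $y, F(y), \ldots, F^{(M-1)}(y)$ requires $M$ forward evaluations, a polynomial number of operations. Next I would analyse the Hankel system (\ref{hankelsys}). The hypothesis $m \leq \lfloor M/2 \rfloor$ ensures that the square Hankel matrix $H(m)$ of size $O(n^k)$ has rank exactly $m$ and, together with $H(m+1)$, isolates the true minimal polynomial uniquely (the classical ``$2m$ terms determine a degree-$m$ recurrence'' argument underlying Berlekamp--Massey). Thus the rank-detection loop of Algorithm 1 terminates at the correct $m$, and solving the Hankel system by Gaussian elimination (or more cheaply by Berlekamp--Massey) costs $O(m^3)$ or $O(m^2)$ field operations, still polynomial in $n$.

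Third, with the coefficients $\al_0, \al_1, \ldots, \al_{m-1}$ in hand, computing $x$ via (\ref{solution}) amounts to one scaled linear combination of $m$ of the already-computed iterates in $\ff^n$, costing $O(mn)$ field operations; division by $\al_0$ is legitimate because Theorem 1 part 1 guarantees $\al_0 \neq 0$. The final forward call verifying $y = F(x)$ is one more polynomial-time evaluation, and by hypothesis it succeeds. Summing all contributions yields a total runtime bounded by a polynomial in $n$, which is the claim.

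The main obstacle I expect is not computational but conceptual: ensuring that the minimal polynomial extracted from only $M$ terms of a potentially much longer periodic sequence really does coincide with the true minimal polynomial of the full sequence $S(F,y)$. This is precisely where the assumption $m \leq \lfloor M/2 \rfloor$ is essential, and where the external verification step $y = F(x)$ in the algorithm serves as a safety net against rank coincidences in smaller Hankel blocks that would otherwise produce a spurious candidate recurrence. Once both safeguards are explicitly invoked, the complexity bookkeeping above closes the argument.
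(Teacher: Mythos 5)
Your proposal is correct and follows essentially the same route as the paper, which simply observes that Algorithm~1 operates on the polynomial-size data of $M$ terms, computes the minimal polynomial from the Hankel system, forms $x$ by the linear combination (\ref{solution}), and uses the check $y=F(x)$ to screen false positives; your step-by-step cost audit merely makes explicit what the paper leaves as a one-sentence remark.
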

In practice the polynomial bound $O(n^k)$ is limited to $k=1,2,3$ for realistic cases.

\subsection{Embedding maps}\label{EmbeddingMaps}
The algorithm for local inversion described above is for maps $F:\ff^n\rightarrow\ff^n$ where the number of bits $n$ of the input and the output are same. In many practical situations of cryptanalysis however, the maps arise as $F:\ff^n\rightarrow\ff^m$ where $m>n$. Such a map is called an embedding map. The solution of local inverse $y=F(x)$ for such a map can be found by using the theory developed above for regular maps with $m=n$ as explained next.

\subsubsection{Solution of local inverse for embedding maps}
Consider a projection $\Pi$ from $\ff^m\rightarrow\ff^n$. For a vector $y$ in $\ff^m$, $\Pi(y)$ denotes a vector of some of the $n$ components of $y$. If $x$ is a local inverse of $y=F(x)$ the for any such projection, $\Pi(y)=(\Pi\circ F)(x)$ hence $x$ is also a local inverse of $y_1=\Pi(y)$ under the map $F_1=\Pi\circ F$. Since $F_1$ is a standard map in $\ff^n$ we can utilize the recurrence sequence generation using $F_1$ and the minimal polynomial to find a possible local inverse of $y_1=F_1(x)$. Then such a solution is verified with all other maps to verify whether $x$ satisfies $\Pi_k(y)=F_k(x)$ for all other projections $\Pi_k$ of $\ff^m$ to $\ff^n$. A solution is rejected if it fails to satisfy the equation for any $k$. Hence we can define the LC of an embedding map to be the smallest LC of the recurrences defined by $y_k=F_k(x)$ over all projections $\Pi_k:\ff^m\rightarrow\ff^n$. In fact it can be observed that a restricted set of projection on following subsets of co-ordinates of $y$ are sufficient to compute and verify local inverse $x$, 
\[
\begin{array}{lcl}
\Pi_1(y) & = & \{y^1,\ldots,y^n\}\\
\Pi_2(y) & = & \{y^2\ldots,y^{(n+1)}\}\\
\vdots & \vdots & \vdots\\
\Pi_{(m-n+1)} & = & \{y^{(m-n+1)},\ldots,y^m\}
\end{array}
\]
The computational effort depends on the minimal polynomial of recurrences $y_k=F_k(x)$ at the first choice of $k$ which chooses a projection. Hence an embedding map can still be locally inverted in polynomial time if there is an index $k$ such that the recurrence defined by $y_k=F_k(x)$ has LC of polynomial order in umber of the variables and the solution verifies with all other projections. 

\section{Typical maps in Cryptanalysis}
In this final section we outline some of the typical maps arising in Cryptography which can be considered for local inversion to show how local inversion approach is a uniform methodology for Cryptanalysis. However it must be understood that the local inversion is a theoretical methodology, its practical feasibility depends on how small is the LC of the local inversion problem as shown in Theorem 2. 

\subsection{Symmetric encryption algorithms}
The two types of algorithms are block ciphers and stream ciphers. 
\begin{enumerate}
    \item \textbf{Block ciphers}: An algorithm is described by $C=E(K,P)$ where $P,C$ are the blocks of plaintext and ciphertext respectively while $K$ is the symmetric key block. In the cases of Known Plaintext Attack (KPA) the pair $(P,C)$ of blocks is known to the attacker. Hence the local inversion problem is $y=F(x)$ where $y=C$, $x=K$ and $F(x)=E(x,P)$. The map $F$ is from $l$-bits of $K$ to block length in bits of $C$. In the chosen plaintext attack $P$ is chosen such that the inversion problem is easier. For local inversion $P$ can be chosen such that the LC of $F(x)$ is smallest or is polynomially bounded. Though such an input may not be easy to compute.
    \item \textbf{Stream ciphers}: The algorithm is described by the dynamical system with a state update map $x(k+1)=F(x(k))$ and an output map $y(k)=f(x(k))$. The initial condition $x(0)=(K,IV)$ consists of symmetric key $K$ (of bit length $l$) and an initialising vector $IV$. Hence the local inverse problem is defined by $y=\hat{F}(x)$ where $y$ is the vector of output stream $(y(k_0),y(k_0+1),\ldots,y_(k_0+l-1))$ while 
    \[
    \hat{F}(x)=((F^*)^{(k_0+i)}f(x,IV)),\;\mbox{for}\;i=0,1,2,\ldots,(k_0+l-1)
    \]
    The map $\hat{F}$ can be made embedding by collecting more samples of the output stream.
\end{enumerate}
In both cases of block and stream ciphers, the maps for key recovery are available in black box form with $l$-bits of input $x$ (key length) and output $y$ on which the inversion algorithm can be applied after choosing $M$ bounded by a polynomial order $O(l^k)$. 

\subsection{RSA cryptanalysis}
RSA has public keys $n=pq$ where $p,q$ are odd primes and an exponent $e$ such that $\gcd(e,\phi(n))=1$. The private keys are $p,q,d$ where $ed=1\mod\phi(n)$. Two local inversion problem can be defined as follows.
\begin{enumerate}
    \item Decryption of ciphertext: For the message $m\in[0,n-1]$ the ciphertext is $c=m^e\mod n$. Let $l$ be the bit length of $n$. For an $l$-bit number $x$ let $(x)$ denote the bit string in the binary expansion of $x$ and $[x]$ denote the operation of recovering the number $x$ from the binary string $(x)$. Then the map $F:\ftwo^l\rightarrow\ftwo^l$ is defined by $y=(c)$ and $F(x)=([x]^e\mod n)$. The sequence $S(F,y)$ in $[0,n-1]$ is
    \[
    \{c,c^e\mod n, c^{e^2}\mod n,c^{e^3}\mod n\ldots\}
    \]
    while the sequence $S(F,y)$ in $\ftwo^l$ is
    \[
    \{(c),([c]^{e}\mod n),([c]^{e^2}\mod n),([c]^{e^3}\mod n),\ldots\}
    \]
    It is well known that the sequence $c^{e^k}\mod n$ is periodic since $e$ is coprime to $\phi(n)$. The LC is obtained from the sequence of binary vectors. The message $m$ is solved as local inverse $(c)=F(x)$, $m=[x]$. This way of solving for $m$ shows how RSA ciphertext can be decrypted by local inversion without factoring $n$. In cases when $c$ causes low LC of the above sequence, the encryption can be broken in polynomial time as shown by Theorem 2. 
    \item Chosen Ciphertext Attack: In this attack a ciphertext $c$ is chosen by the attacker and the decrypted plaintext $m$ is sought as a verification. The relation is $m=c^d\mod n$. Hence the local inverse problem is defined by $y=m$ and $F(x)=c^x\mod n$. Since $d$ belongs to the ring $\zz_{\phi(n)}$ the number of bits of $x$ is equal to $\phi(n)$. Although $\phi(n)$ is not available to the attacker we can choose the number of bits to be $n$. The sequence $S(F,y)$ is
    \[
    \{m,c^m\mod n,c^{c^m\mod n}\mod n,\ldots\}
    \]
    while the map $F:\ftwo^l\rightarrow\ftwo^l$ is defined by $(m)=(c^{[x]}\mod n)$ and the sequence of binary vectors is
    \[
    \{(m),(c^{[m]}\mod n),(c^{[c^{[m]}\mod n]}\mod n),\ldots\}
    \]
    Note that the computation modulo $n$ automatically restricts the exponents modulo $\phi(n)$ hence the sequence generated is correct for local inversion of the map $F$. Periodicity of the sequences above are established in \cite{suleLI}. The sequence is converted to a sequence of binary expansions to compute the LC and the local inverse $x$. The inverse $x$ is converted back to a number in $[0,n-1]$. Now an important observation to be noted in this inversion is that the inverse of $m=F(x)$ is $x$ which is not necessarily $d$ but satisfies $ex=1\mod\phi(n)$. Hence although the private key $d$ is not exactly recovered, the inverse allows decryption  of any other ciphertext $\tilde{c}$ corresponding to the plaintext $\tilde{m}$ since
    \[
    (\tilde{c}^x\mod n)^e\mod n=\tilde{c}^{ex}\mod n=\tilde{c}
    \]
    hence $\tilde{m}=\tilde{c}^x\mod n$. This shows that local inversion by CCA on RSA breaks RSA equivalent to computing the private $d$ key without factoring the modulus $n$.
\end{enumerate}

\section{Discrete logarithm in finite fields and elliptic curves}
Discrete Logarithm problem (DLP) has been the first major tool for development of Public Key Cryptography as the Diffie-Hellman key exchange scheme (DH). Later the DH scheme was upgraded to elliptic curves leading to Elliptic Curve Cryptography (ECC). The DLP on Elliptic Curves (ECDLP) has not been found to have an algorithm for solution better than exponential complexity. What we show in this section is that both these problems can be addressed as local inversion problems. Hence it turns out that at least in cases when the map $F$ and given data $y$ are such that the sequence $S(F,y)$ has low LC, the DLPs can be solved efficiently. The situation in case of ECDLP is much complicated because the map $F$ turns out to be an embedding.

\subsection{DLP on finite fields}
In a prime field $\fp$ for a large prime $p$, the exponent function with a base $a$ in $\fp^*$, $\phi:[1,p-1]\rightarrow\fp,x\mapsto a^x\mod p$ is actually defined as a map from $\fp^*$ to $\fp^*$. Taking binary expansion of numbers in $[1,p-1]$ if $l$ is the bit length of $p$ this function expresses a map operation
\[
\begin{array}{llcl}
F:& \ftwo^l & \rightarrow & \ftwo^l\\
  & (x) & \mapsto & (a^{[x]}\mod p)\\
\end{array}
\]
where $(x)$ denotes the binary string corresponding to a number $x\in[1,p-1]$ and $[x]$ the reconstruction of the number $x$ from the binary string. Let $b$ be a given element in $\fp^*$ for a primitive element $a$. Then the equation for local inversion is $F((x))=(b)$. The map is available for black box computation. The sequence of binary vectors $S(F,(b))$ is
\beq\label{SeqDLFp}
\{(b),(a^{[b]}\mod p),(a^{a^{[b]}\mod p}\mod p),\ldots\}
\eeq
As defined by the map iterations $y(0)=(b)$, $y(k+1)=F(y(k))$ for $F$ defined as above on $\ftwo^l$. Reader is referred to \cite{suleLI} to see deatils justifying the periodicity of the sequence.

The solution of the DLP can thus be found as local inversion of this map $F$ at the value $(b)$. If the LC of the above sequence given upto polynomial number of terms $O(l^k)$ is small then as described in Theorem 2, the DLP can be solved in polynomial time using the Algorithm 1. For the solution of DLP over general field $\fq$ using local inversion the reader is referred to \cite{suleLI}.

\subsection{DLP on elliptic curves}
In the discrete log problem on an elliptic curve $E$ over $\fq$, there are given points $P$ and $Q=[m]P$ in $E$ where $m$ is the integral multiplier. It is required to solve for the multiplier $m$. Define the map
\[
F_{P}:m\mapsto[m]P
\]
then the local inversion of $Q=F_{P}(m)$ solves the ECDLP. However the map $F$ needs to be expressed in the standard form as before. We can do this by defining the map in $\ftwo^l$ for an appropriate $l$. 

\subsubsection{Formulation as local inversion}
The multiplier $m$ is less than the order of the cyclic group $n=<P>$ in $E$. Sometimes the group $E$ itself has prime order $n=\sharp E$ hence the order of $<P>$ is $n$. Hence to fix the number of bits $l$ in $m$ we consider estimates of the order of $E$. The well known bound on the order of $E$ is
\[
\sharp E(\fq)\leq q+1+2\sqrt{q}
\]
Assuming $q>4$ we have $\sharp E\leq 2q$. On the other hand the point $Q$ in $E$ has two co-ordinates in $\fq$. Thus the bit length of $\sharp E\leq 1+\log (2q)=2+\log q=l$ while the bit length of two co-ordinates of a point taken together is $1+\log (2q)=l$. Consider the map $F_{P}$ defining the scalar multiplication of $P$ in $E$ 
\[
\begin{array}{lcl}
F_{P} & : & \zz_n\rightarrow E\\
 & & m\mapsto [m]P
\end{array} 
\]
In the binary co-ordinate expansion on both sides this mapping is  
\beq\label{mapFPm}
F_{P}((m))=((Q_x),(Q_y))
\eeq
where $(m)$ denotes the co-efficients in the binary expansion of $m$ and $(Q_x)$, $(Q_y)$ are co-efficients in the binary expansions of co-ordinates of $Q=[m]P$. We shall denote the binary expansion of the co-ordinate pair of $Q$ as $(Q)$. 

\subsubsection{Formulation of $F_{P}$ as a map over the binary field}
In order to utilize the previous theory of inversion on the map (\ref{mapFPm}), it is necessary to express it as a map in the cartesian spaces of $\ftwo$ and understand whether it is an embedding. 

Let $r=1+\log n$ where $n$ is the order of $<P>$. Then $F_{P}$ in (\ref{mapFPm}) represents a map $F_{P}:\ftwo^r\rightarrow\ftwo^l$ where $r<l$. Thus $F_{P}$ is an embedding. Hence it is required to apply the theory of local inversion of embedding of section 2.5 to solve the embedding equation for the local inversion of $F_{P}(m)=Q$ from the binary representation in (\ref{mapFPm}). The application of Algorithm 1 requires that $n$ the order of $P$ is known. Let $t=l-r$, then the projection equations as refrred in the subsection (\ref{EmbeddingMaps}) give $(t+1)$ standard equations 
\[
\Pi_i\circ F_{P}(x)=\Pi_i((Q)),i=1,2,\ldots (t+1)
\]
denote by $F_i=\Pi_i\circ F_{P}$ and $y(i)=\Pi_i((Q))$ where $\Pi_i$ are projection on $r$ components of $y$. The details of this projection are described in \cite{suleLI}. Following Theorem 2 now we have

\begin{theorem}
If for any of the indices $i$, $1\leq i\leq (t+1)$ the projection equation $F_{i}(x)=y(i)$ has a periodic recurrence sequence $S(F_{i},y(i))$ with a LC $m\leq\lfloor M/2\rfloor$ where $M$ is of polynomial order $O(l^k)$ and the local inverse $x$ satisfies all other projection equations then the ECDLP is solved in polynomial time by Algorithm 1.
\end{theorem}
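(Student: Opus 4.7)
The plan is to reduce Theorem 3 to a direct application of Theorem 2 via the embedding-map construction of Section~\ref{EmbeddingMaps}. First I would set up the bookkeeping: since $F_P:\ftwo^r\to\ftwo^l$ with $r<l$ is an embedding, each projection $\Pi_i:\ftwo^l\to\ftwo^r$ ($i=1,\ldots,t+1$, where $t=l-r$) yields a square map $F_i=\Pi_i\circ F_P:\ftwo^r\to\ftwo^r$ together with a target $y(i)=\Pi_i((Q))$. A candidate local inverse $x$ of the embedding equation $F_P(x)=(Q)$ is exactly a point of $\ftwo^r$ that simultaneously satisfies all $t+1$ projection equations $F_i(x)=y(i)$; conversely any true local inverse of $F_P(x)=(Q)$ satisfies all of them by construction. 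Thus Theorem~3 is the natural specialisation of the embedding-inversion methodology to the map $F_P$.

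The main steps I would carry out are as follows. Fix the favourable index $i$ for which $S(F_i,y(i))$ is periodic with linear complexity $m\leq\lfloor M/2\rfloor$, $M=O(l^k)$. Generate the first $M$ iterates of the sequence by calling the black box $F_P$ and projecting; this costs $O(M)$ scalar multiplications on $E$, each of which is polynomial in $l$. Feed these into Algorithm~1. By Theorem~2, since the Hankel system (\ref{hankelsys}) of size at most $m\leq\lfloor M/2\rfloor$ is uniquely solvable, the algorithm returns in polynomial time the unique periodic-orbit solution $\hat{x}$ of $y(i)=F_i(x)$ via the formula (\ref{solution}). Next I would verify that $\hat{x}$ also satisfies $F_j(\hat{x})=y(j)$ for every remaining index $j\neq i$; this is one additional evaluation $F_P(\hat{x})$ followed by $t=O(l)$ projection equality checks. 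By hypothesis this verification succeeds, so $\hat{x}$ is a genuine local inverse of $F_P(x)=(Q)$, which decodes to the required multiplier $m=[\hat{x}]$, solving the ECDLP.

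The complexity accounting is routine but needs to be stated: each forward step of $F_P$ is polynomial in $l$, assembling and inverting the $m\times m$ Hankel matrix over $\ftwo$ costs $O(m^3)=O(l^{3k})$ operations, and the verification loop over the $t+1$ projections adds only $O(l)$ more forward evaluations, so the total cost is polynomial in $l$.

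I expect the one subtle point, rather than any hard calculation, to be the step from ``$\hat{x}$ solves $F_i(\hat{x})=y(i)$'' to ``$\hat{x}$ solves the full embedding equation''. The map $F_i$ discards $t$ coordinates of $F_P$, so $F_i$ is in general many-to-one on $\ftwo^r$ and the periodic-orbit solution delivered by Theorem~2 could a priori be a spurious preimage of $y(i)$ that is not a preimage of $(Q)$. This is exactly why the hypothesis requires that $\hat{x}$ also satisfy the other $t$ projection equations: the simultaneous satisfaction of all $t+1$ projections uniquely pins down $(Q)$ and hence forces $F_P(\hat{x})=(Q)$. No further analysis beyond this verification is needed, and the conclusion of Theorem~3 follows.
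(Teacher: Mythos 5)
Your proposal is correct and follows exactly the route the paper intends: the paper gives no separate proof of this theorem, presenting it as an immediate consequence of Theorem 2 applied to the projected maps $F_i=\Pi_i\circ F_P$ of the embedding construction in Section 2.5, with the cross-verification against the remaining projections built into the hypothesis. Your fleshing-out of the verification step (the sliding windows $\Pi_1,\ldots,\Pi_{t+1}$ jointly cover all $l$ coordinates, so satisfying every projection equation forces $F_P(\hat{x})=(Q)$) and of the polynomial cost accounting is a faithful, and somewhat more explicit, version of the same argument.
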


\section{Conclusions}
This brief paper summerizes the ideas on a new method which may be called as a Black Box approach to Cryptanalysis by Local Inversion of maps in finite fields. The methodology is universal in the sense that it is applicable to diverse problems of cryptanalysis of both symmetric as well as public key cryrptography. The most important conclusion arising from this method is that, since the recursive sequences associated with the inversion problems identified by the method have not been studied in the past for their LC for most of the maps $F$ in Cryptography, urgent work should be carried out to determine density of points $y$ in the image of $F$ for which the sequences $S(F,y)$ have low LC. Such low LC cases can arise due to plaintext inputs, IVs and parameters like primes and co-effcients of elliptic curves. Low LC is disruptive to the security of the algorithm. Hence these densities should be useful for deciding grading of security of cryptographic primitives and should be included as part of standards security for design of ciphers.

\end{document}